

\documentclass[a4paper,12pt]{article}
\usepackage{amsmath,amssymb,amsfonts,amsthm}
\newtheorem{theorem}{Theorem}[section]

\newtheorem{definition}[theorem]{Definition}

\usepackage{wasysym}
\usepackage{graphicx}
\newcommand{\R}{\mathcal{R}}

\newcommand{\Rt}{\mathbb{R}^3}

\newcommand{\dom}{U}
\newcommand{\Su}{\mathcal{S}}

\newcommand{\E}{\mathcal{E}}
\newcommand{\EH}{\mathcal{E}_H}





\title{Bekenstein bounds and inequalities between size, charge,  angular momentum
  and energy for bodies}
\author{Sergio Dain\\
  Facultad de Matem\'atica, Astronom\'{i}a y F\'{i}sica, \\
     Universidad Nacional de C\'ordoba, \\
Instituto de F\'{i}sica Enrique Gaviola, IFEG, CONICET,\\
  Ciudad Universitaria (5000) C\'ordoba, Argentina.}

\begin{document}
\maketitle

\begin{abstract}
  Bekenstein bounds for the entropy of a body imply a universal inequality
  between size, energy, angular momentum and charge. We prove this inequality
  in Electromagnetism. We also prove it, for the particular case of zero
  angular momentum, in General Relativity. We further discuss the relation of
  these inequalities with inequalities between size, angular momentum and
  charge recently studied in the literature.
\end{abstract}

\section{Introduction}
\label{sec:bekenstein-bounds}

A universal bound on the entropy of a macroscopic body has been proposed by
Bekenstein \cite{Bekenstein:1980jp}
\begin{equation}
  \label{eq:3}
 \frac{\hbar c}{2\pi k_B} S\leq  \E \R
\end{equation}
where $S$ is the entropy, $k_B$ is Boltzmann's constant, $\R$ is the radius of
the smallest sphere that can enclose the body, $\E$ is the total energy,
$\hbar$ is the reduced Planck constant, and $c$ is the speed of light.  Using a
similar kind of heuristic arguments, a generalization of (\ref{eq:3}) including
the electric charge $Q$ and the angular momentum $J$ of the body has been also
proposed \cite{Hod:2000ju} \cite{Zaslavskii92} \cite{Bekenstein:1999cf}
\begin{equation}
  \label{eq:6}
\frac{\hbar c}{2\pi k_B}  S\leq \sqrt{(\E\R)^2-c^2J^2}-\frac{Q^2}{2}.
\end{equation}
The original physical arguments used to present these inequalities
involve black holes.  However, a remarkable feature of these inequalities is
that the gravitational constant $G$ does not appear in them.

The bound (\ref{eq:3}) has been extensively studied, see, for example, the
review articles \cite{Bekenstein:2004sh}, \cite{Bousso:2002ju}, \cite{Wald01}
and references therein.  However, the generalization (\ref{eq:6}) appears to
have received much less attention.  In particular, since the entropy $S$ is
always non-negative, the bound (\ref{eq:6}) implies the following inequality in
which the entropy $S$ and the constant $\hbar$ are not involved
\begin{equation}
  \label{eq:76}
  \frac{Q^4}{4\R^2} + \frac{c^2 J^2}{\R^2}\leq \E^2.
\end{equation}
Equality in \eqref{eq:76} implies, by (\ref{eq:6}), that the entropy of the
body is zero and hence the system should be in a very particular state.  Then,
we expect some kind of rigidity statement for the equality in (\ref{eq:76}). 

The main purpose of this article is to study  inequality (\ref{eq:76}).  The
only fundamental constant that appears in \eqref{eq:76} is $c$. Hence, the
obvious theory to test  inequality (\ref{eq:76}) is Electromagnetism. To the
best of our knowledge, such basic study, in full generality, has not been done
before.  In section \ref{sec:electromagnetism} we prove that \eqref{eq:76}
holds as a consequence of Maxwell equations. This theorem provides an indirect
but highly non-trivial evidence in favor of the bound (\ref{eq:6}).

In section \ref{sec:general-relativity} we first discuss the relation of the
inequality \eqref{eq:76} with inequalities between size, angular momentum and
charge recently studied in General Relativity \cite{Dain:2013gma}. Then, we
point out that a result of Reiris \cite{Reiris:2014tva} proves  inequality
\eqref{eq:76} in spherical symmetry in General Relativity.  Finally, we
generalize this result and prove  inequality \eqref{eq:76}, with $J=0$, for
time-symmetric initial data.

\section{Electromagnetism}
\label{sec:electromagnetism}
To fix the notation, let us write Maxwell's equations in Gaussian units
\begin{align}
  \label{eq:32}
  \nabla \times \mathbf{B}- \frac{1}{c}\frac{\partial \mathbf{E}}{\partial t} &=\frac{4\pi}{c}\mathbf{j},  &  \nabla \cdot  \mathbf{E} &= 4\pi \rho, \\
  \nabla \times \mathbf{E}+\frac{1}{c}\frac{\partial \mathbf{B}}{\partial t} &=0, & \nabla \cdot \mathbf{B} &=0,\label{eq:32v}
\end{align}
where $\mathbf{E}$, $\mathbf{B}$ are the electric and magnetic field, and $
\rho$, $\mathbf{j}$ are the charge and current density. These equations are
written in terms of inertial coordinates $(t, \mathbf{x})$ where $t$ is the
time coordinate and $ \mathbf{x}$ are spatial coordinates centered at an
arbitrary point $x_0$. 

Let $\dom$ be an arbitrary region in  space. The electric charge contained in
$\dom$ is given by
\begin{equation}
  \label{eq:16}
  Q(\dom)=\int_\dom \rho  ,
\end{equation}
and the energy of the electromagnetic field in $\dom$ is 
\begin{equation}
  \label{eq:26}
  \E(\dom)= \frac{1}{8\pi} \int_{\dom} |\mathbf{E}|^2 + |\mathbf{B}|^2  \, .
\end{equation}

The angular momentum in the region $\dom$ in the direction of the unit vector
$\mathbf{k}$ with respect to the point $x_0$ is given by
\begin{equation}
  \label{eq:24}
  \mathbf{J}\cdot \mathbf{k}=\frac{1}{4\pi c} \int_\dom (\mathbf{x} \times (\mathbf{E} \times
  \mathbf{B}))\cdot \mathbf{k}  \, .
\end{equation}
Finally, in order to study  inequality (\ref{eq:76}) we need to provide a
definition of the radius $\R$ for an arbitrary region $\dom$. 

\begin{definition}
\label{d:R}
We define the radius $\R$ of the region $\dom$ as the the radius of the
smallest sphere  that encloses  $\dom$. 
 \end{definition}
 Given a domain $\dom$, we denote by $B_{\R}$ the smallest ball  that
 encloses $\dom$ and $x_0$ denote the center of this ball. Note that, in general,
 $x_0$ is not in $\dom$, see figure \ref{fig:1}. We denote by $\partial B_{\R}$
 the boundary of $B_\R$, that is, the sphere of radius $\R$ centered at $x_0$. 
\begin{figure}
  \centering
  \includegraphics{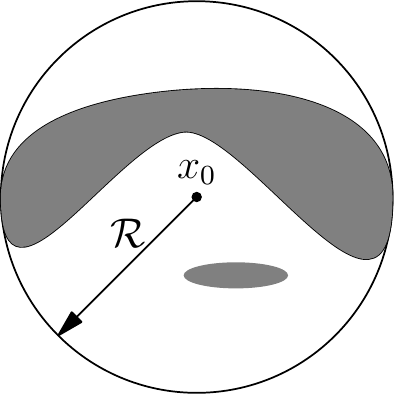}
  \caption{The domain $\dom$ is colored with gray. The radius $\R$ is defined
    as the radius of  the smallest sphere that
    encloses $\dom$. For this particular domain $\dom$ the center $x_0$ of that
    sphere is not in $\dom$.}
  \label{fig:1}
\end{figure}

Before dealing with the general case, it is useful to begin with
Electrostatics, which in particular implies $J=0$.  We will see that the proof
for the dynamical case is based on the proof for the Electrostatics case. Also,
in Electrostatics is simpler to discuss the scope of inequality (\ref{eq:76}).

The equations of Electrostatics  are given by
\begin{equation}
  \label{eq:14}
  \nabla \cdot \mathbf{E} = 4 \pi \rho, \quad \nabla \times \mathbf{E} =0.
\end{equation}

The potential $\Phi$ is defined by $\mathbf{E}=-\nabla \Phi$ and it satisfies
the Poisson equation
\begin{equation}
  \label{eq:17}
  \Delta \Phi= -4\pi \rho. 
\end{equation}
Using equation (\ref{eq:17}) and Gauss's  theorem we obtain that the charge can be
written as boundary integral
\begin{equation}
  \label{eq:64}
  Q(\dom)=-\frac{1}{4\pi}\oint_{\partial \dom} \partial_n \Phi ,
\end{equation}
where $\partial_n$ denote partial derivative along  the exterior unit
normal vector of the boundary $\partial \dom$. 
The total  Electrostatics energy is given by
\begin{equation}
  \label{eq:15}
  \E=\frac{1}{8\pi}\int_{\Rt} |\mathbf{E}|^2  .
\end{equation}

\begin{theorem}
\label{t:1}
Assume that the charge density $\rho$ has compact support contained in the
region $\dom$.  In Electrostatics (i.e. we assume equations (\ref{eq:14})), the
following inequality holds
  \begin{equation}
    \label{eq:5}
    Q^2\leq 2 \E \R,
  \end{equation}
  where $Q$ is the charge contained in $\dom$, $\R$ is the radius of $\dom$
  defined above and $\E$ is the total electromagnetic energy given by
  (\ref{eq:15}). The equality in (\ref{eq:5}) holds if and only if the electric
  field is equal to the electric field produced by an spherical thin shell of
  constant surface charge density and radius $\R$. In particular, this implies that the
  electric field vanished  inside $\dom$. 
\end{theorem}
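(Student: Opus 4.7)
The plan is to bound the energy contained in the exterior of $B_\R$ from below by $Q^2/(2\R)$ and then note that the total energy is at least as large. Since $\rho$ has support in $\dom \subset B_\R$, the total charge enclosed by any sphere $\partial B_r$ of radius $r\ge \R$ centered at $x_0$ equals $Q$. Applying Gauss's theorem (i.e.\ integrating \eqref{eq:17} over $B_r$) gives
\begin{equation*}
\oint_{\partial B_r}\mathbf{E}\cdot \mathbf{n}\, \ds = 4\pi Q, \qquad r\ge \R.
\end{equation*}

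The key estimate is to apply Cauchy--Schwarz on the sphere $\partial B_r$ to the identity above:
\begin{equation*}
(4\pi Q)^2 \le \left(\oint_{\partial B_r} 1\,\ds\right)\left(\oint_{\partial B_r} (\mathbf{E}\cdot \mathbf{n})^2 \,\ds\right) \le 4\pi r^2 \oint_{\partial B_r} |\mathbf{E}|^2\,\ds.
\end{equation*}
This yields the pointwise-in-$r$ bound $\oint_{\partial B_r} |\mathbf{E}|^2 \,\ds \ge 4\pi Q^2/r^2$. Integrating from $\R$ to $\infty$ and using polar coordinates centered at $x_0$,
\begin{equation*}
\int_{\mathbb{R}^3\setminus B_\R} |\mathbf{E}|^2 \dv \ge 4\pi Q^2 \int_\R^\infty \frac{dr}{r^2}= \frac{4\pi Q^2}{\R}.
\end{equation*}
Dividing by $8\pi$, the exterior energy $\E_{\text{out}}$ is at least $Q^2/(2\R)$, and since $\E\ge \E_{\text{out}}$, the inequality $Q^2 \le 2\E\R$ follows.

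For the rigidity claim, equality in \eqref{eq:5} forces equality in every step. First, the interior energy must vanish, so $\mathbf{E}=0$ in $B_\R$ (in particular in $\dom$). Second, Cauchy--Schwarz equality at each $r>\R$ forces $\mathbf{E}\cdot \mathbf{n}$ to be constant on $\partial B_r$, and the second Cauchy--Schwarz inequality (dropping the tangential components) must also be saturated, so $\mathbf{E}$ is purely radial on each sphere with magnitude depending only on $r$. Combined with $\oint \mathbf{E}\cdot\mathbf{n}\, \ds =4\pi Q$, this gives $\mathbf{E}=Q\hat{r}/r^2$ outside $B_\R$, which is precisely the field of a uniform spherical shell of radius $\R$ and total charge $Q$. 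Conversely, one verifies directly that this configuration saturates \eqref{eq:5}.

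I do not expect a serious obstacle: the only mild subtlety is to justify the interchange of integrations and the use of polar coordinates, which is unproblematic because finite energy $\E<\infty$ ensures that $|\mathbf{E}|^2$ is integrable on $\mathbb{R}^3\setminus B_\R$ and Fubini applies. The argument uses only Gauss's law and the quadratic nature of the energy, which is the reason the same scheme will serve as the backbone for the dynamical proof promised later in the section.
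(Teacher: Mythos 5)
Your proof is correct, but it takes a genuinely different route from the paper. You work directly with the field: Gauss's law fixes the flux $\oint_{\partial B_r}\mathbf{E}\cdot\mathbf{n}\,ds=4\pi Q$ through every sphere of radius $r\geq\R$, Cauchy--Schwarz on each sphere converts this into the pointwise-in-$r$ bound $\oint_{\partial B_r}|\mathbf{E}|^2\,ds\geq 4\pi Q^2/r^2$, and integration in $r$ gives the exterior energy bound $Q^2/(2\R)$. The paper instead works at the level of the potential: it subtracts the shell potential $\Phi_0$ from $\Phi$, shows by an integration by parts (using that $\Phi_0$ is constant on $B_\R$ and that $\Phi_1=\Phi-\Phi_0$ carries zero net flux through $\partial B_\R$) that the cross term drops out, and arrives at the exact identity $\E=Q^2/(2\R)+\frac{1}{8\pi}\int_{\Rt}|\nabla\Phi_1|^2$. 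The two approaches trade different advantages. Yours uses only $\nabla\cdot\mathbf{E}=4\pi\rho$ and never invokes $\nabla\times\mathbf{E}=0$ or the existence of a potential, so it is more elementary and applies verbatim to the electric field of the full dynamical Maxwell system; it also recovers the interior estimate \eqref{eq:12} as the slack in $\E\geq\E_{\mathrm{out}}$. The paper's decomposition buys an exact identity rather than an inequality, locating the entire energy excess in the single term $\int|\nabla\Phi_1|^2$; this identity is then reused wholesale in the proof of theorem \ref{t:em}, where it is combined with the Coulomb-gauge splitting of the electric field, so the dynamical argument in the paper genuinely depends on having the potential-level statement \eqref{eq:18} rather than just the inequality. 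Your rigidity argument is also sound: vanishing interior energy plus saturation of both Cauchy--Schwarz steps on almost every sphere forces $\mathbf{E}=Q\hat r/r^2$ outside $B_\R$ and $\mathbf{E}=0$ inside, which is the shell field, matching the paper's conclusion.
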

\begin{proof}
  The system has electric field $\mathbf{E}$ (with potential
  $\Phi$), charge density $\rho$ with support in $\dom$ and total charge
  $Q$. Let $\R$ be the radius of the domain $\dom$ defined in \ref{d:R} and
  $B_{\R}$ its corresponding ball centered at $x_0$.

 Consider the following auxiliary potential defined by
 \begin{equation}
  \label{eq:11}
  \Phi_0=\begin{cases}\frac{Q}{r} \text{ if } r\geq \R,\\
\frac{Q}{\R} \text{ if } r\leq \R,
\end{cases}
\end{equation}
where $r$ is the radial distance to $x_0$. The potential $\Phi_0$ corresponds
to the potential of an spherical thin shell of radius $\R$, constant surface charge
density and total charge $Q$. 

Define $\Phi_1$ by the difference
\begin{equation}
  \label{eq:58}
  \Phi_1=\Phi-\Phi_0.
\end{equation}
By construction $\Phi_1$ satisfies
\begin{equation}
  \label{eq:59}
  \Delta \Phi_1=\begin{cases} 0  &\text{ if } r> \R,\\
-4\pi \rho &\text{ if } r < \R,
\end{cases}
\end{equation}
and
\begin{equation}
  \label{eq:60}
  \oint_{\partial B_\R} \partial_r \Phi_1 =0.
\end{equation}
Equation (\ref{eq:60}) follows since in the definition of $\Phi_0$ we have used
the total charge $Q$ of the potential $\Phi$.

The total energy of the system  is given by
\begin{align}
  \label{eq:61}
  \E & =\frac{1}{8\pi} \int_{\Rt} |\nabla \Phi|^2,\\
     & = \frac{1}{8\pi} \int_{\Rt} |\nabla \Phi_0|^2+ |\nabla \Phi_1|^2+2 \nabla
       \Phi_0 \cdot \nabla \Phi_1,\label{eq:61b}
\end{align}
where in line (\ref{eq:61b}) we have used the definition (\ref{eq:58}). To
calculate the last term in (\ref{eq:61b}) we decompose the domain of the  integral in
$\Rt\setminus B_\R$ and $B_\R$. We have
\begin{equation}
  \label{eq:62}
  \int_{B_\R}   \nabla \Phi_0 \cdot \nabla \Phi_1 = 0,
\end{equation}
since $\Phi_0$ is constant in $B_R$. For the other integral we have
\begin{equation}
  \label{eq:63}
  \int_{\Rt\setminus B_\R}   \nabla \Phi_0 \cdot \nabla \Phi_1 =
  \int_{\Rt\setminus B_\R}  \nabla \cdot ( \Phi_0  \nabla \Phi_1) - \Phi_0 \Delta \Phi_1.
\end{equation}
Since $\Delta \Phi_1=0$ on $\Rt\setminus B_R$ the second term in the right
hand side of (\ref{eq:63}) vanishes. The first term can be converted in a
boundary integral
\begin{equation}
  \label{eq:8}
   \int_{\Rt\setminus B_\R}  \nabla ( \Phi_0  \nabla \Phi_1)=\lim_{r\to \infty}
   \oint_{\partial B_r} \Phi_0 \partial_r \Phi_1 -   \oint_{\partial B_\R} \Phi_0 \partial_r \Phi_1.
\end{equation}
The first term on the right hand side of equation (\ref{eq:8}) vanishes by the
decay conditions of $\Phi_0$ and $\Phi_1$. For the second term we have
\begin{align}
  \label{eq:9}
  \oint_{\partial B_\R} \Phi_0 \partial_r \Phi_1 &=\Phi_0 \oint_{\partial
                                                  B_\R} \partial_r \Phi_1,\\
&=0. \label{eq:9bc}
\end{align}
Where in line (\ref{eq:9}) we have used that $\Phi_0$ is constant on spheres
and in line (\ref{eq:9bc}) we have used (\ref{eq:60}). 
Hence, we have proved that
\begin{equation}
  \label{eq:10}
  \E  = \frac{1}{8\pi} \int_{\Rt} |\nabla \Phi_0|^2+ |\nabla \Phi_1|^2\, .
\end{equation}
The first term in (\ref{eq:10}) can be computed explicitly using (\ref{eq:11}). It is the binding
energy of an spherical shell of radius $\R$ with constant charge surface density and total
charge $Q$. Then, we finally obtain
\begin{equation}
  \label{eq:18}
  \E  = \frac{Q^2}{2\R}+\frac{1}{8\pi} \int_{\Rt}  |\nabla \Phi_1|^2.
\end{equation}
This equality proves  inequality (\ref{eq:5}) and also the rigidity statement:
if the equality  in  (\ref{eq:5}) holds, then (\ref{eq:18}) implies  $\nabla \Phi_1=0$
and hence $\mathbf{E}=\nabla \Phi_0$. 
\end{proof}
Note that the equality (\ref{eq:18}) implies the following estimate for the
fields inside the domain $\dom$
\begin{equation}
  \label{eq:12}
  \E  - \frac{Q^2}{2\R}\geq \frac{1}{8\pi} \int_{\dom}  |\mathbf{E}|^2,
\end{equation}
where we have used that in $\dom$ we have $\nabla \Phi_1 =\nabla
\Phi=\mathbf{E}$. 

Let us discuss the scope of inequality (\ref{eq:5}).  The first important
observation is that in inequality (\ref{eq:5}) the energy $\E$ is the total
energy of the system, which in electrostatic is equivalent to the binding
energy. That is, $\E$ represents the work needed to assemble the charge
configuration from infinity. Inequality is clearly false if instead of the
total energy we use the integral of the energy density on the domain $\dom$
given by (\ref{eq:26}): for example, take the spherical shell of radius $\R$
and constant surface charge density. Then, the domain $\dom$ is given by the
ball $B_\R$, but the integral of the energy density over $B_\R$ is zero since
the electric field vanishes in $B_\R$.

Inequality (\ref{eq:5}) is not valid if we consider many disconnected
regions and take $Q$ and $\R$ to be the corresponding radius and charge of only
one region and $\E$ the total energy of the system. The counterexample is the
following. Consider two spherical thin shells of constant surface density with
radius $R_1$ and $R_2$ and total charge $Q$ and $-Q$. The separation between
the centers is $L$, and we assume that they do not overlap, i.e.
$L\geq R_1 + R_2$. The total energy of this system is given by
\begin{equation}
  \label{eq:29}
  \E=\E_1 + \E_2 - \frac{Q^2}{L},
\end{equation}
where the self energy of each shell is given by 
\begin{equation}
  \label{eq:30}
  \E_1=\frac{Q^2}{2R_1}, \quad  \E_2=\frac{Q^2}{2R_2}.
\end{equation}
For a simple way to compute  the third  term  in (\ref{eq:29})
(namely, the interaction energy) see, for example \cite{zangwill2013modern}
p. 75. 
At the contact point $L=R_1+R_2$ we have
\begin{equation}
  \label{eq:1}
  \E-\E_1= \frac{Q^2(R_1-R_2)}{2R_2(R_1+R_2)}.
\end{equation}
Take $R_2>R_1$, then if the shells are close enough to the contact point, from (\ref{eq:1}) we
deduce that
\begin{equation}
  \label{eq:31}
   \E-\E_1<0.
\end{equation}
But then
\begin{equation}
  \label{eq:33}
  \E<\E_1=\frac{Q^2}{2R_1},
\end{equation}
and hence inequality (\ref{eq:5}) is not valid for the shell $R_1$ if we
take  in  (\ref{eq:5}) $\E$ as the total energy and $Q$ and $\R$ as the charge
and radius of the shell. 

An alternative and useful way to prove  inequality (\ref{eq:5}) in
electrostatic is the  following. By Thomson's theorem the electrostatic energy of a
body of fixed shape, size and charge is minimized when its charge $Q$
distributes itself to make the electrostatic potential constant throughout the
body (see, for example, \cite{zangwill2013modern} p. 128). That is, the
original configuration is replaced by a conductor with the same total charge
and size which has less or equal energy. For conductors inequality
(\ref{eq:5}) is related with the capacity of the conductor, defined as
follows.  Consider a conductor $\dom$ and define the potential $\Phi_1$ by
\begin{align}
    \label{eq:18b}
    \Delta \Phi_1 &=0 \text{ in } \Rt\setminus \dom,\\
    \Phi_1 &= 1 \text{ at } \partial \dom, \\
\lim_{r\to \infty} \Phi_1&=0.
  \end{align}
The capacity of $\dom$ is given by
\begin{equation}
  \label{eq:22}
  C=-\frac{1}{4\pi}\oint_{\partial \dom} \partial_n \Phi_1.
\end{equation}
The capacity $C$ satisfies the well-known relation
\begin{equation}
  \label{eq:10b}
  \E = \frac{Q^2}{2C},
\end{equation}
where $\E$ is the total electrostatic energy of the conductor. 
Then, for a conductor, inequality (\ref{eq:5}) is equivalent to
\begin{equation}
  \label{eq:70}
  C\leq \R.
\end{equation}
Since, by Thomson's theorem conductors minimize the energy, we have proved
that inequality (\ref{eq:5}) for general configurations reduces to the
inequality \eqref{eq:70} for conductors.

To prove \eqref{eq:70}  we use the variational characterization of $C$
\begin{equation}
  \label{eq:8b}
  C=\frac{1}{4\pi}\inf_{\Phi \in K} \int_{\Rt\setminus \dom} |\nabla \Phi|^2 ,
\end{equation}
where $K$ is the set of all functions $\Phi$ that decay at infinity and are
equal to $1$ at $\partial \dom$.  Consider the following test function
\begin{equation}
  \label{eq:11b}
  \Phi_R=\begin{cases}\frac{\R}{r} \text{ if } r\geq \R,\\
1 \text{ if } r\leq \R.
\end{cases}
\end{equation}
We have that $\Phi_R \in K$ and hence we can use (\ref{eq:8}) to obtain
\begin{equation}
  \label{eq:9b}
  C\leq \frac{1}{4\pi} \int_{\Rt\setminus B_{\R} }|\nabla \Phi_R|^2 =\R.
\end{equation}

This characterization in terms of the capacity is useful to find interesting
examples and estimates.  In particular it allows to prove the following
relevant statement: inequality \eqref{eq:5} is not valid if we replace the
definition of $\R$ by the area radius, namely
\begin{equation}
  \label{eq:71}
  \R_A= \sqrt{\frac{A}{4\pi}},
\end{equation}
where $A$ is the area of the boundary $\partial \dom$. Note that the area
radius $\R_A$ represents perhaps the simplest definition of radius that can be
directly translated into curved spaces. The following counterexample shows
that even in flat space $\R_A$ is not an appropriate measure of size in our
context.

Consider a prolate conducting ellipsoid with radius $a$ and $b$ with $a>b$. The
capacity of this conductor is given by (see
\cite{landau1984electrodynamics}, p. 22)
\begin{equation}
  \label{eq:72}
  C=\frac{\sqrt{a^2-b^2}}{\cosh^{-1} a/b},
\end{equation}
and the surface area is given by
\begin{equation}
  \label{eq:73}
  A= 2\pi b^2\left(1+\frac{a}{b} \frac{\sin^{-1} e}{e}  \right), \quad e^2=1-\frac{b^2}{a^2}.
\end{equation}
We calculate the dimensionless quotient
\begin{equation}
  \label{eq:74}
  \frac{C}{\R_A}=\frac{\sqrt{2} \sqrt{\frac{a^2}{b^2}-1}}{\left(\cosh^{-1} \frac{a}{b}\right)\sqrt{1+\frac{a}{b} \frac{\sin^{-1} e}{e} }}.
\end{equation}
Note that $C/\R_A$ depends only on the dimensionless parameter $a/b$. We take
the limit  $a/b \to \infty$
\begin{equation}
  \label{eq:75}
 \lim_{a/b\to \infty}  \frac{C}{\R_A} \approx \frac{2}{\sqrt{\pi}}
 \frac{\sqrt{a/b}}{\log(a/b)}\to \infty.
\end{equation}
And hence inequality (\ref{eq:70}) is not satisfied for $\R_A$.

With this example we conclude the study of inequality (\ref{eq:5}) in
Electrostatics. From now on, we will deal with the full Maxwell's equations
(\ref{eq:32})--(\ref{eq:32v}).  As a preliminary step, we prove inequality
(\ref{eq:76}) with $Q=0$ and $J\neq 0$. This particular case will also be used
in the general proof of inequality (\ref{eq:76}). 

\begin{theorem}
\label{t:ejr}
Consider a solution of Maxwell's equations (\ref{eq:32})--(\ref{eq:32v}) in the
domain $\dom$. Let $\R$ be the radius of $\dom$ defined in \ref{d:R} and let
$x_0$ be the center of the corresponding sphere. Then the following inequality
holds
\begin{equation}
  \label{eq:27}
  c|J(\dom)|\leq \R \E(\dom),
\end{equation}
where $J(\dom)$ is the angular momentum of the electromagnetic field given by
(\ref{eq:24}) with respect to the point $x_0$. Moreover, the equality in
(\ref{eq:27}) holds if and only if the electromagnetic field vanishes in $\dom$.
\end{theorem}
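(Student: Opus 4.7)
The plan is to reduce the inequality to a pointwise estimate on the integrand of $\mathbf{J}$, bounded by the energy density, and then integrate. Since $|\mathbf{J}| = \sup_{|\mathbf{k}|=1} \mathbf{J}\cdot\mathbf{k}$, it suffices to fix $\mathbf{k}$ as the unit vector parallel to $\mathbf{J}$, so that
\begin{equation*}
c|\mathbf{J}(\dom)| = \frac{1}{4\pi}\int_\dom (\mathbf{x}\times(\mathbf{E}\times\mathbf{B}))\cdot\mathbf{k}.
\end{equation*}
From here I would use elementary vector inequalities only.

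First I would apply the scalar triple product bound $|(\mathbf{x}\times\mathbf{v})\cdot\mathbf{k}|\leq |\mathbf{x}||\mathbf{v}|$ (for $|\mathbf{k}|=1$) with $\mathbf{v}=\mathbf{E}\times\mathbf{B}$, and then $|\mathbf{E}\times\mathbf{B}|\leq |\mathbf{E}||\mathbf{B}|$, to obtain the pointwise bound $|(\mathbf{x}\times(\mathbf{E}\times\mathbf{B}))\cdot\mathbf{k}|\leq |\mathbf{x}||\mathbf{E}||\mathbf{B}|$. Next, since $\mathbf{x}$ is measured from $x_0$ and $\dom\subset B_\R$ by Definition \ref{d:R}, one has $|\mathbf{x}|\leq \R$ throughout $\dom$. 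Finally, the AM--GM inequality gives $|\mathbf{E}||\mathbf{B}|\leq \tfrac{1}{2}(|\mathbf{E}|^2+|\mathbf{B}|^2)$. Chaining these three bounds and integrating yields
\begin{equation*}
c|\mathbf{J}(\dom)|\leq \frac{\R}{8\pi}\int_\dom (|\mathbf{E}|^2+|\mathbf{B}|^2)=\R\,\E(\dom),
\end{equation*}
which is the desired inequality (\ref{eq:27}).

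For the rigidity statement I would trace equality through each step. Equality in the integrated bound forces pointwise equality almost everywhere wherever the integrand is nonzero. In the AM--GM step this forces $|\mathbf{E}|=|\mathbf{B}|$ a.e. in $\dom$. In the step $|\mathbf{x}|\leq \R$ it forces the integrand $|\mathbf{E}||\mathbf{B}|$ to be supported in $\{|\mathbf{x}|=\R\}$, a set of Lebesgue measure zero; hence $|\mathbf{E}||\mathbf{B}|=0$ a.e. in $\dom$. Combined with $|\mathbf{E}|=|\mathbf{B}|$, this gives $\mathbf{E}=\mathbf{B}=0$ a.e. in $\dom$, and then everywhere in $\dom$ by continuity. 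Note that the conditions from the Cauchy--Schwarz steps ($\mathbf{E}\perp\mathbf{B}$, appropriate alignment with $\mathbf{k}\times\mathbf{x}$) are automatically satisfied when the fields vanish, so they pose no additional constraint.

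The main obstacle is not analytic but conceptual: one must recognize that no integration by parts, use of Maxwell's equations, or auxiliary potential (as in Theorem \ref{t:1}) is required. The result is \emph{purely kinematical}: it follows from the algebraic structure of $\mathbf{J}$ and $\E$ as quadratic expressions in $(\mathbf{E},\mathbf{B})$ together with $|\mathbf{x}|\leq \R$ on $\dom$. Because the argument is local and does not exploit the field equations, any time-dependent configuration of electromagnetic-type fields on $\dom$ would satisfy the same bound; this is why this inequality will serve as a clean building block in the general case combining $Q$ and $J$.
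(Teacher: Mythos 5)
Your proposal is correct and follows essentially the same route as the paper: the pointwise chain $|(\mathbf{x}\times(\mathbf{E}\times\mathbf{B}))\cdot\mathbf{k}|\leq|\mathbf{x}||\mathbf{E}||\mathbf{B}|\leq\R\,|\mathbf{E}||\mathbf{B}|\leq\tfrac{\R}{2}(|\mathbf{E}|^2+|\mathbf{B}|^2)$, with rigidity traced through the equality cases of each step (the paper merely packages the last two steps into a single identity with explicit remainder terms $(1-|\mathbf{x}|/\R)(|\mathbf{E}|^2+|\mathbf{B}|^2)$ and $\tfrac{|\mathbf{x}|}{\R}(|\mathbf{E}|-|\mathbf{B}|)^2$, which it reuses quantitatively later). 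Your closing observation that the bound is purely kinematical and independent of the field equations matches the paper's own remark that it extends to any theory satisfying the dominant energy condition.
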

Note that inequality (\ref{eq:27}) is purely quasilocal, in contrast with
the previous inequality (\ref{eq:5}): in (\ref{eq:27}) there appears only
quantities defined on the domain $\dom$ and not global quantities like the
total energy $\E$. Of course, since $\E \geq \E(\dom)$, inequality
(\ref{eq:27}) implies the global inequality
\begin{equation}
  \label{eq:44}
   c|J(\dom)|\leq \R \E.
\end{equation}
Moreover, theorem \ref{t:ejr} implies also a rigidity statement for the
inequality (\ref{eq:44}): equality holds if and only if the
electromagnetic field vanishes everywhere. 

\begin{proof}
We estimate the difference
\begin{align}
  \label{eq:19}
 \E(\dom)- \frac{c}{\R}|J(\dom)| &= \frac{1}{8\pi} \int_\dom   |\mathbf{E}|^2 +
 |\mathbf{B}|^2 - \frac{1}{4\pi \R} \left|  \int_\dom (\mathbf{x} \times (\mathbf{E} \times
  \mathbf{B}))\cdot \mathbf{k} \right|,\\
&\geq \frac{1}{8\pi} \int_\dom   |\mathbf{E}|^2 +
 |\mathbf{B}|^2 - \frac{2}{\R} |(\mathbf{x} \times (\mathbf{E} \times
  \mathbf{B}))\cdot \mathbf{k}|.\label{eq:19b}
\end{align}
The integrand of the angular momentum (i.e. the angular momentum density)
satisfies the elementary inequality
\begin{align}
  \label{eq:28}
 | (\mathbf{x} \times (\mathbf{E} \times
  \mathbf{B}))\cdot \mathbf{k} | &\leq  |(\mathbf{x} \times (\mathbf{E} \times
  \mathbf{B}))| |\mathbf{k} |,\\
& =  |(\mathbf{x} \times (\mathbf{E} \times
  \mathbf{B}))| ,\label{eq:28a}\\
& \leq |\mathbf{x} |  |\mathbf{E}| |\mathbf{B}|\label{eq:28b},
\end{align}
where in line (\ref{eq:28}) we used the inequality $|\mathbf{a}\cdot
\mathbf{b}|\leq |\mathbf{a}| |\mathbf{b}|$, in line  (\ref{eq:28a}) we 
used that $\mathbf{k}$ is an unit vector and in line (\ref{eq:28b}) we  used
the inequality $|\mathbf{a}\times
\mathbf{b}|\leq |\mathbf{a}| |\mathbf{b}|$. Using this inequality, we obtain
\begin{equation}
  \label{eq:7a}
 |\mathbf{E}|^2 +
 |\mathbf{B}|^2 - \frac{2}{\R} |(\mathbf{x} \times (\mathbf{E} \times
  \mathbf{B}))\cdot \mathbf{k}|  \geq  |\mathbf{E}|^2 +
 |\mathbf{B}|^2 - 2 \frac{|\mathbf{x} |}{\R}  |\mathbf{E}| |\mathbf{B}|. 
\end{equation}
We write the right hand side of the inequality as follows
\begin{align}
  \label{eq:39}
 &|\mathbf{E}|^2 +
 |\mathbf{B}|^2 - 2 \frac{|\mathbf{x} |}{\R}  |\mathbf{E}| |\mathbf{B}| =\\
&=  |\mathbf{E}|^2 +
 |\mathbf{B}|^2 - \frac{|\mathbf{x} |}{\R}\left(|\mathbf{E}|^2+
  |\mathbf{B}|^2\right) + \frac{|\mathbf{x} |}{\R}\left(|\mathbf{E}|^2+
  |\mathbf{B}|^2\right) - 2 \frac{|\mathbf{x} |}{\R}  |\mathbf{E}|
  |\mathbf{B}|,\\
&=\left(1- \frac{|\mathbf{x} |}{\R}\right) \left(|\mathbf{E}|^2 +
 |\mathbf{B}|^2\right) +\frac{|\mathbf{x} |}{\R}  \left(|\mathbf{E}|-
 |\mathbf{B}|  \right)^2,\\
& \geq \left(1- \frac{|\mathbf{x} |}{\R}\right) \left(|\mathbf{E}|^2 +
 |\mathbf{B}|^2\right).
\end{align}
Collecting these inequalities we arrive to our final result
\begin{equation}
  \label{eq:7}
  \E(\dom)- \frac{c}{\R}|J(\dom)| \geq  \frac{1}{8\pi} \int_\dom \left(1- \frac{|\mathbf{x} |}{\R}\right) \left(|\mathbf{E}|^2 +
 |\mathbf{B}|^2\right). 
\end{equation}
By the definition of $\R$ we have $|\mathbf{x}|\leq \R$ on $\dom$, and hence
the integrand on the right hand side of  inequality (\ref{eq:7}) is
non-negative. This proves (\ref{eq:27}). Moreover, inequality \eqref{eq:7}
also proves the rigidity statement: if equality holds, then the integrand on
the right hand side of  inequality (\ref{eq:7}) should vanish. Then, for
every $\mathbf{x} \in \dom$ that is not on the sphere $\partial B_\R$ we have
that both $\mathbf{E}$ and $\mathbf{B}$ are zero. By continuity, the fields are
also zero on the points on the sphere $\partial B_\R$

\end{proof}

The proof of inequality (\ref{eq:27}) (but not the rigidity statement) can
be directly generalized to any classical field theory. It is a direct
consequence of the dominant energy condition\footnote{I thank to G. Dotti for
  providing me this argument.}.  Let $T_{\mu\nu}$ be the electromagnetic energy
momentum tensor of the theory. The indices $\mu,\nu, \cdots$ are
4-dimensional and we are using signature $(-+++)$.  For example, for electromagnetism we have
\begin{equation}
  \label{eq:4b}
  T_{\mu \nu}= \frac{1}{4\pi}\left(F_{\mu \lambda}
    F_\nu{}^{\lambda}-\frac{1}{4}g_{\mu \nu} F_{\lambda \gamma} F^{\lambda \gamma}  \right),
\end{equation}
where $F_{\mu\nu}$ is the (antisymmetric) electromagnetic field tensor that
satisfies Maxwell's equations. Consider a spacelike surface $\dom$ with normal
$t^\mu$. The energy is given by
\begin{equation}
  \label{eq:25}
  \E= \int_{\dom} T_{\mu\nu} t^\mu t^\nu.  
\end{equation}
Let $\eta^\mu$ be a 
Killing vector field that corresponds to space rotations. The angular momentum
corresponding to the rotation $\eta^\mu$ is given by
\begin{equation}
  \label{eq:13}
  J(\dom)=\frac{1}{c}\int_\dom T_{\mu\nu}t^\mu \eta^\nu. 
\end{equation}
Choosing coordinates such that $x^i$ are spacelike Cartesian coordinates on the
surface $U$ and $t^\mu=(1,0,0,0)$, then the space rotations are characterized by 
\begin{equation}
  \label{eq:23}
  \eta_i = \epsilon_{ijk} k^j x^k,
\end{equation}
where $k$ is an arbitrary constant spacelike unit vector that represent the axis of
rotation and the indices $i,j,k\dots$ are 3-dimensional.  For the case of the
electromagnetism, it is easy to check, using (\ref{eq:4b}), that the definition
(\ref{eq:13}) coincides with (\ref{eq:24}).
 
Assume that $T_{\mu\nu}$ satisfies the dominant energy condition, namely
\begin{equation}
\label{eq:32b}
  T_{\mu\nu} \xi^\mu k^\nu\geq 0,
\end{equation}
for all future directed timelike or null vectors $k^\mu$ and $\xi^\mu$.

Denote by $\eta$ the square norm of $\eta^i$, that is 
$\eta=\eta^i\eta_i=\eta^\mu\eta_\mu$  and define the unit vector
$\hat \eta^\mu = \eta^\mu \eta^{-1/2}$. Then, the vector
\begin{equation}
  \label{eq:52}
  k^\mu = t^\mu -\hat \eta^\mu,
\end{equation}
is null future directed (since $t^\mu\eta_\mu=0$ ). Choosing $\xi^\mu=t^\mu$ and
$k^\mu$ given by (\ref{eq:52}), from (\ref{eq:32}) we obtain
\begin{equation}
  \label{eq:53}
  T_{\mu\nu} t^\mu t^\nu\geq T_{\mu\nu} t^\mu \hat \eta^\nu.
\end{equation}
Since $\eta$ is the square of the distance to the axis, we have that
\begin{equation}
  \label{eq:54}
  \eta \leq \R^2,
\end{equation}
where $\R$ is the radius a the ball that encloses $\dom$. Hence we deduce
\begin{align}
  \label{eq:55}
  J(\dom)=\frac{1}{c}\int_\dom T_{\mu\nu}t^\mu \eta^\nu 
  &=\frac{1}{c}\int_\dom T_{\mu\nu}t^\mu \eta^{1/2} \hat \eta^\nu ,  \\
&\leq \frac{\R}{c}\int_\dom T_{\mu\nu}t^\mu  \hat \eta^\nu,  \\
& \leq \frac{\R}{c}\int_\dom T_{\mu\nu}t^\mu  t^\nu,  \\
&= \frac{\R\E(\dom)}{c}.
\end{align}
Hence, we have proved  inequality (\ref{eq:27}) for a general energy
momentum tensor that satisfies the dominant energy condition (\ref{eq:32b}). 
Note, however, that we have not proved the rigidity statement as in theorem
\ref{t:ejr}. 

Finally, we prove inequality (\ref{eq:76}) for Electromagnetism in full generality. 
\begin{theorem}
\label{t:em}
Assume that $\rho(x,t_0)$, for some $t_0$, has compact support contained in
$\dom$.  Consider a solution of Maxwell's equations
(\ref{eq:32})--(\ref{eq:32v}) that decay at infinity. Then the following
inequality holds at $t_0$
\begin{equation}
    \label{eq:5c}
    \frac{c|J(\dom)|}{\R}+\frac{Q^2}{2\R} \leq  \E.
  \end{equation}
In particular, inequality  \eqref{eq:5c} implies
\begin{equation}
  \label{eq:69}
  \frac{Q^4}{4\R^2} + \frac{c^2 |J(\dom)|^2}{\R^2}\leq \E^2. 
\end{equation}
Moreover, if the equality in (\ref{eq:5c}) holds, then the electromagnetic field
is that  produced by a electrostatic spherical thin shell of radius $\R$ and
charge $Q$. For that case, the magnetic field vanished everywhere and hence
$J=0$.

\end{theorem}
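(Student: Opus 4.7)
The plan is to combine the shell decomposition used in Theorem \ref{t:1} with the pointwise inequality used in Theorem \ref{t:ejr}. At the fixed time $t_0$, I would let $\mathbf{E}_0=-\nabla\Phi_0$, with $\Phi_0$ the shell potential from (\ref{eq:11}), and split the electric field as $\mathbf{E}=\mathbf{E}_0+\mathbf{E}_1$. Because the shell carries the same total charge $Q$ as $\rho(\cdot,t_0)$ and $\rho(\cdot,t_0)$ is supported in $\dom\subset B_\R$, Gauss's theorem forces $\nabla\cdot\mathbf{E}_1=0$ on $\Rt\setminus B_\R$ and $\oint_{\partial B_\R}\mathbf{E}_1\cdot\hat r\,\ds=0$. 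The crucial observation is that $\mathbf{E}_0=0$ throughout $B_\R$, hence $\mathbf{E}_1=\mathbf{E}$ on $\dom$; this is what will eventually let the bound of Theorem \ref{t:ejr} glue to the shell contribution.

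Next I would repeat the integration-by-parts computation in the proof of Theorem \ref{t:1}, but at the level of fields rather than potentials. The cross term $\int_{\Rt}\mathbf{E}_0\cdot\mathbf{E}_1$ splits into an integral over $B_\R$ (vanishing because $\mathbf{E}_0=0$ there) and one over $\Rt\setminus B_\R$ (vanishing by Gauss, using $\Phi_0$ constant on $\partial B_\R$, zero flux of $\mathbf{E}_1$ through $\partial B_\R$, $\nabla\cdot\mathbf{E}_1=0$ outside $B_\R$, and the decay of the Maxwell solution together with $\Phi_0\sim 1/r$). This yields the key energy identity
\[
\E \;=\; \frac{Q^2}{2\R} \;+\; \frac{1}{8\pi}\int_{\Rt}\bigl(|\mathbf{E}_1|^2+|\mathbf{B}|^2\bigr).
\]
Dropping the non-negative integral over $\Rt\setminus\dom$ on the right and using $\mathbf{E}_1=\mathbf{E}$ on $\dom$ gives $\E\geq \frac{Q^2}{2\R}+\E(\dom)$, and then Theorem \ref{t:ejr} furnishes $\E(\dom)\geq c|J(\dom)|/\R$, which chains to (\ref{eq:5c}). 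Inequality (\ref{eq:69}) follows at once from (\ref{eq:5c}) by squaring and noting $(a+b)^2\geq a^2+b^2$ for $a,b\geq 0$.

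For the rigidity statement I would track both inequalities in the chain. Equality in (\ref{eq:5c}) forces $\int_{\Rt\setminus\dom}(|\mathbf{E}_1|^2+|\mathbf{B}|^2)=0$, so $\mathbf{E}_1$ and $\mathbf{B}$ vanish outside $\dom$; it also forces equality in Theorem \ref{t:ejr}, whose rigidity clause gives $\mathbf{E}=\mathbf{B}=0$ on $\dom$. Combining with $\mathbf{E}_0=0$ on $\dom$, I get $\mathbf{E}_1=0$ on $\dom$ as well, so $\mathbf{E}_1\equiv 0$ and $\mathbf{B}\equiv 0$ on $\Rt$; hence $\mathbf{E}=\mathbf{E}_0$ is exactly the shell field and $J=0$. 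The main obstacle is not conceptual but verification: one must confirm that the shell decomposition of Theorem \ref{t:1}, originally written for the electrostatic potential, goes through at the field level at an instant when $\mathbf{E}$ need not be curl-free, and that the decay hypothesis on the full Maxwell solution kills the boundary term at infinity in the cross-term calculation. Both checks go through because only $\nabla\cdot\mathbf{E}_1$ and its flux through $\partial B_\R$ enter, and these are controlled exactly as in the static case.
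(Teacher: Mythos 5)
Your proof is correct, and it takes a genuinely different route from the paper's. The paper introduces the vector potential in the Coulomb gauge, $\mathbf{E}=-\nabla\Phi-\partial_t\mathbf{A}$ with $\nabla\cdot\mathbf{A}=0$, proves two orthogonality relations $\int_{\Rt}\nabla\Phi\cdot\partial_t\mathbf{A}=0$ and $\int_{\Rt}\nabla\Phi_1\cdot\partial_t\mathbf{A}=0$ using the gauge condition and decay, and only then applies the shell decomposition of Theorem \ref{t:1} to the scalar potential. You bypass the gauge entirely by decomposing the electric field itself, $\mathbf{E}=\mathbf{E}_0+\mathbf{E}_1$ with $\mathbf{E}_0=-\nabla\Phi_0$, and showing the single cross term $\int_{\Rt}\mathbf{E}_0\cdot\mathbf{E}_1$ vanishes: it is zero on $B_\R$ because $\mathbf{E}_0=0$ there, and on $\Rt\setminus B_\R$ one integrates by parts against $\Phi_0$, using that $\nabla\cdot\mathbf{E}_1=0$ outside $B_\R$ (Gauss's law plus the support assumption on $\rho$), that $\Phi_0$ is constant on spheres, and that $\mathbf{E}_1$ has zero net flux through every sphere $\partial B_r$ with $r\geq\R$ --- so both boundary terms are in fact identically zero, not merely small. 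Since $\nabla\Phi_1+\partial_t\mathbf{A}=-\mathbf{E}_1$, your energy identity is literally the same as the paper's (\ref{eq:65})--(\ref{eq:78}); what your derivation buys is that the curl of $\mathbf{E}$, and hence the time dependence, never enters (only $\nabla\cdot\mathbf{E}=4\pi\rho$ is used), so the ``verification'' you flag at the end is indeed automatic. The remaining steps --- dropping the non-negative exterior integral, invoking Theorem \ref{t:ejr} on $\dom$, squaring to get (\ref{eq:69}), and tracking equality through both inequalities --- coincide with the paper's. Your rigidity argument is even slightly cleaner: the paper must additionally argue that $\mathbf{B}\equiv 0$ forces $\mathbf{A}=0$ via the gauge condition before concluding $\nabla\Phi_1=0$, whereas you obtain $\mathbf{E}_1\equiv 0$ and $\mathbf{B}\equiv 0$ directly from the vanishing of the two non-negative remainder terms together with $\mathbf{E}_0=0$ on $\dom$.
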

\begin{proof}
  Consider the Coulomb gauge\footnote{I thank O. Reula for suggesting the idea
    of using the Coulomb gauge.}
  \begin{equation}
    \label{eq:35}
    B=\nabla \times \mathbf{A}, \quad \mathbf{E}=-\nabla \Phi -\frac{\partial
      \mathbf{A}}{\partial t}, 
  \end{equation}
where the potential $\mathbf{A}$ satisfies the Coulomb gauge condition
\begin{equation}
  \label{eq:45}
  \nabla \cdot \mathbf{A}=0. 
\end{equation}
In this gauge, the total energy can be written in the following form
\begin{align}
  \label{eq:49}
  \E &=\frac{1}{8\pi} \int_{\Rt} |\mathbf{E}|^2+ |\mathbf{B}|^2,\\
&= \frac{1}{8\pi} \int_{\Rt} |\nabla \Phi|^2 + 2 \nabla \Phi \cdot \frac{\partial
  \mathbf{A}}{\partial t}  + \left|\frac{\partial
  \mathbf{A}}{\partial t}\right|^2+ |\mathbf{B}|^2, \label{eq:49b}
\end{align}
where in line \eqref{eq:49b} we have used the expression \eqref{eq:35} for the
electric field in terms of the potential $\mathbf{A}$. 
For the second term in the integrand of (\ref{eq:49b}) we use the identity
\begin{align}
  \label{eq:51}
   \nabla \Phi \cdot \frac{\partial
  \mathbf{A}}{\partial t} & = \nabla \cdot \left(\Phi  \frac{\partial
  \mathbf{A}}{\partial t}  \right)- \Phi \frac{\partial \nabla \cdot \mathbf{A} }{\partial t},\\ 
&=\nabla \cdot \left(\Phi  \frac{\partial
  \mathbf{A}}{\partial t}  \right),
\label{eq:51b}
\end{align}
where in line \eqref{eq:51b} we have used the Coulomb gauge
condition (\ref{eq:45}). Using the  asymptotic falloff conditions for
$\Phi$ and $\mathbf{A}$ and Gauss theorem, from \eqref{eq:51b} we obtain
\begin{equation}
  \label{eq:56}
  \int_{\Rt} \nabla \Phi \cdot \frac{\partial
  \mathbf{A}}{\partial t} =0.
\end{equation}
Then, we have  the following expression for
the total energy
\begin{equation}
  \label{eq:21}
  \E = \frac{1}{8\pi} \int_{\Rt} |\nabla \Phi|^2 + \left|\frac{\partial
  \mathbf{A}}{\partial t}\right|^2+ |\mathbf{B}|^2.
\end{equation}

The potential $\Phi(x,t)$ satisfies the Poisson equation
\begin{equation}
  \label{eq:46}
  \Delta \Phi(x,t)=-4\pi \rho(x,t),
\end{equation}
for all $t$. At a fixed $t$, we can perform the same decomposition
(\ref{eq:58}) for the potential $\Phi(x,t)$ used in theorem \ref{t:1}. Then,
using equation (\ref{eq:18}), we
obtain
\begin{equation}
  \label{eq:22c}
  \E = \frac{Q^2}{2\R}+ \frac{1}{8\pi} \int_{\Rt} |\nabla \Phi_1|^2 + \left|\frac{\partial
  \mathbf{A}}{\partial t}\right|^2+ |\mathbf{B}|^2,  
\end{equation}
where $\Phi_1$ is defined by (\ref{eq:58}) and (\ref{eq:11}). 
By  the same integration by parts argument used to deduce (\ref{eq:56}) we obtain that
\begin{equation}
  \label{eq:56v}
  \int_{\Rt} \nabla \Phi_1 \cdot \frac{\partial
  \mathbf{A}}{\partial t} =0.
\end{equation}
Hence, we can write the energy (\ref{eq:22c}) in the following way
\begin{equation}
  \label{eq:65}
 \E = \frac{Q^2}{2\R}+ \frac{1}{8\pi} \int_{\Rt} \left |\nabla \Phi_1  + \frac{\partial
  \mathbf{A}}{\partial t}\right|^2 + |\mathbf{B}|^2. 
\end{equation}
We decompose the integral in (\ref{eq:65}) over the domains $\Rt\setminus \dom$ and $\dom$ and we use the following simple but important identity
\begin{align}
  \label{eq:50}
  \int_{\Rt} \left |\nabla \Phi_1  + \frac{\partial
  \mathbf{A}}{\partial t}\right|^2 & =  \int_{\Rt\setminus\dom} \left |\nabla \Phi_1  + \frac{\partial
  \mathbf{A}}{\partial t}\right|^2 + \int_{\dom} \left |\nabla \Phi_1  + \frac{\partial
  \mathbf{A}}{\partial t}\right|^2, \\
 & =  \int_{\Rt\setminus\dom} \left |\nabla \Phi_1  + \frac{\partial
  \mathbf{A}}{\partial t}\right|^2 + \int_{\dom} \left |\nabla \Phi  + \frac{\partial
  \mathbf{A}}{\partial t}\right|^2, \label{eq:50v} \\
& =  \int_{\Rt\setminus\dom} \left |\nabla \Phi_1  + \frac{\partial
  \mathbf{A}}{\partial t}\right|^2 + \int_{\dom} |\mathbf{E}|^2,  \label{eq:50vv}
\end{align}
where in line (\ref{eq:50v}) we have used that $\nabla \Phi_1=\nabla \Phi$ in
$\dom$ since $\Phi_0$ is constant in $\dom$. And in line \eqref{eq:50vv} we
have used the expression for the electric field in the Coulomb gauge
(\ref{eq:35}). Then we obtain the following expression for the energy $\E$
\begin{equation}
  \label{eq:78}
\E=\frac{Q^2}{2\R}+   \E(\dom) +  
\frac{1}{8\pi}  \int_{\Rt\setminus \dom} \left|\nabla \Phi_1+ 
\frac{\partial \mathbf{A}}{\partial t}\right|^2+|\mathbf{B}|^2,
\end{equation}
where $\E(\dom)$ is the electromagnetic energy density integrated over the
domain $\dom$, namely
\begin{equation}
  \label{eq:79}
\E(\dom)=    \frac{1}{8\pi} \int_{\dom}   |\mathbf{E} |^2 +|\mathbf{B}|^2.
\end{equation}
We use theorem \ref{t:ejr} to bound $\E(\dom)$ (i.e. the estimate
(\ref{eq:7})) and we finally have 
\begin{equation}
  \label{eq:20}
  \E-\frac{Q^2}{2\R}-\frac{c|J(\dom)|}{\R} \geq  \frac{1}{8\pi} \left(
  \int_{\Rt\setminus \dom} \left|\nabla \Phi_1  + \frac{\partial
  \mathbf{A}}{\partial t}\right|^2+ |\mathbf{B}|^2 +\int_\dom \left(1- \frac{|\mathbf{x} |}{\R}\right) \left(|\mathbf{E}|^2 +
 |\mathbf{B}|^2\right) \right).
\end{equation}
Since the left hand side of (\ref{eq:20}) is non-negative we have proved the
inequality \eqref{eq:5c}. Inequality (\ref{eq:20}) implies also the
rigidity statement. Assume the equality in \eqref{eq:5c}, then the integrand on
the right hand side of (\ref{eq:20}) should vanish.  This implies that
$\mathbf{B}=0$ everywhere, and hence the potential $\mathbf{A}$ is a
gradient. Using equation (\ref{eq:45}) and the falloff condition for
$\mathbf{A}$ we deduce that $\mathbf{A}=0$.  Then, using again (\ref{eq:20}) we
obtain that $\nabla \Phi_1=0$ and hence the statement is proved.

Taking the square of inequality \eqref{eq:5c},  we obtain 
\begin{equation}
  \label{eq:47}
 \frac{c|J| Q^2}{\R} + \frac{Q^4}{4\R^2}+ \frac{c^2 J^2}{\R^2}\leq \E^2,
\end{equation}
which, in particular,  implies inequality \eqref{eq:69}. 

\end{proof}

\section{General Relativity}
\label{sec:general-relativity}
In this section we study inequality (\ref{eq:76}) in General Relativity.
In section \ref{sec:scam} we discuss a remarkably relation between this
inequality and inequalities between size, charge and angular momentum. In
section \ref{sec:proofgr} we present a proof of inequality (\ref{eq:76}),
with $J=0$, for time-symmetric initial conditions.

\subsection{Inequalities between size, charge and angular
  momentum}
\label{sec:scam}

For a black hole the entropy is given by the horizon area $A$ 
\begin{equation}
  \label{eq:1v}
  S_{bh}  = \frac{k_Bc^3}{4G\hbar} A.
\end{equation}
Inequality \eqref{eq:6} is constructed in such a way that for a Kerr-Newman
black hole, using the formula \eqref{eq:1v}, we get an equality.  Moreover,
Szabados \cite{Szabados04} observed that for dynamical black holes this
inequality is also expected to hold. It is the generalization of the Penrose
inequality including charge and angular momentum (see the review article
\cite{Mars:2009cj} and \cite{Dain:2014xda} \cite{dain12} and the discussion
therein)

For ordinary bodies, inequality (\ref{eq:76}) is closely related to
inequalities between size, angular momentum and charge recently studied in
\cite{Dain:2013gma}. To show this relation we argue as follows.  The Hoop
conjecture essentially says that if matter is enclosed in a sufficiently small
region, then the system should collapse to a black hole \cite{thorne72}. Then
if the body is not a black hole we expect an inequality of the form
\begin{equation}
  \label{eq:37}
  \frac{G}{c^4} \E  \leq k\R,  
\end{equation}
where $k$ is an universal dimensionless constant of order one. The exact value
of $k$ will depend on the precise formulation of the Hoop conjecture and this
is not important in what follows. 

Using (\ref{eq:37}) to bound $\E$ in (\ref{eq:76})  we obtain
\begin{equation}
  \label{eq:2}
  \frac{Q^4}{4}+c^2 J^2 \leq  k^2 \frac{c^8}{G^2} \R^4.
\end{equation}
Note that the constant $G$ appears in (\ref{eq:2}). That is, inequality
(\ref{eq:2}) involves two fundamental constants ($c$ and $G$), in contrast to
(\ref{eq:76}) that involves only one ($c$).  On the other hand inequality
(\ref{eq:2}) involves fewer physical quantities (charge, angular momentum and
size) than inequality (\ref{eq:76}) (charge, angular momentum, size and
energy).

The bound \eqref{eq:2} implies 
\begin{equation}
  \label{eq:38}
\frac{G}{c^3} |J|  \leq k \R^2.
\end{equation}
Inequality \eqref{eq:38} was conjectured in \cite{Dain:2013gma} using
different kind of arguments as those leading to (\ref{eq:76}). With an
appropriate definition of size, a version of this inequality was proved for
constant density bodies in \cite{Dain:2013gma}. Recently Khuri
\cite{Khuri:2015zla} has proved it in the general case, using the same measure
of size as in \cite{Dain:2013gma}. However, these inequalities are not expected
to be sharp. We will came back to this point bellow.

Also from \eqref{eq:2} we get the inequality
\begin{equation}
  \label{eq:4}
  |Q| \leq (2k)^{1/2}\frac{c^2}{G^{1/2}} \R. 
\end{equation}
This inequality can be also deduced using similar arguments as in
\cite{Dain:2013gma} and it was studied for some particular examples in
\cite{Rubio14}. Recently Khuri \cite{Khuri:2015xpa} has proved a general
version of inequality \eqref{eq:4} using a similar (but not identical)
measure of size as the one used in inequality \eqref{eq:38}. As in the case
of angular momentum, this result is not expected to be sharp.

The relation between the Bekenstein bounds and inequalities \eqref{eq:38} and
\eqref{eq:4} provides two important new insights. The first one is the
following. We pointed out that in that inequality \eqref{eq:38} was conjectured
in \cite{Dain:2013gma} using heuristic physical arguments and also the
inequality \eqref{eq:4} can be deduced using similar kind of
arguments. However, with these arguments inequalities \eqref{eq:38} and
\eqref{eq:4} are deduced individually.  These kind of arguments do not seem to
provide a way of deducing the complete inequality \eqref{eq:2}, which is
obtained here for first time using the Bekenstein bounds.  Moreover, the
arguments presented above suggest that there is only one universal constant $k$
to be fixed. This constant can be fixed analyzing a simple limit case, for
example spherical symmetry, with $J=0$. We are currently working on this
problem \cite{anglada14}.

The second, and perhaps most important point concerns the rigidity of the
inequality \eqref{eq:2}. The arguments presented in \cite{Dain:2013gma} do not
give any insight about what happens when equality is reached in
\eqref{eq:2}. The Bekenstein bounds provide such statement. Let assume
that the equality is reached in \eqref{eq:2}. Since we have assumed that it is
not a black hole we can use the Hoop conjecture inequality \eqref{eq:37} to
obtain
\begin{equation}
  \label{eq:40}
   \frac{Q^4}{4}+c^2 J^2 =  k^2 \frac{c^8}{G^2} \R^4 \geq \E^2.
\end{equation}
But then we can use inequality \eqref{eq:76} to conclude that if the equality
is reached in \eqref{eq:2} then the equality should also hold in
\eqref{eq:76}. By the Bekenstein bound \eqref{eq:6}, this implies that the
entropy of the body is zero. Hence we have the following rigidity statement for
inequality \eqref{eq:2}: the equality is achieved if and only the entropy of
the body is zero. In General Relativity, this statement appears to imply that
in fact the equality is achieved if and only if the spacetime is flat. We will
further discuss this point in the next section.

\subsection{Proof of the inequality between charge, energy and size for time-symmetric initial data}
\label{sec:proofgr}
In General Relativity, the inequality (\ref{eq:2}) was proved in spherical
symmetry (in a different context) by Reiris \cite{Reiris:2014tva}.  In the
following we generalize this result to time-symmetric initial data.

The most important difficulty to study these kind of inequalities in curved
spaces is how to define the measure of size $\R$.  We propose a new measure of
size which is tailored to the proof of theorem \ref{t:GR}. This measure of size
represents a natural generalization to curved spaces of the definition
\ref{d:R} used in section \ref{sec:electromagnetism}. 

The definition of size and the proof of the theorem is based on the
\emph{inverse mean curvature flow} (IMCF).  A family of 2-surfaces on a
Riemannian manifold evolves under the IMCF if the outward normal speed at which
a point on the surface moves is given by the reciprocal of the mean curvature
of the surface. For the precise definition and properties of the IMCF we refer
to \cite{Huisken01}. The IMCF has played a key role in the  proof of the Riemannian Penrose inequality \cite{Huisken01}.

Using the IMCF we define the following radius $\R$ of a
region $\dom$ in a Riemannian manifold
\begin{definition}
\label{d:imcf}
  Consider a region $\dom$ on a complete, asymptotically flat, Riemannian
  manifold. Take a point $x_0$ on the manifold and consider the inverse mean
  curvature flow starting at this point. Consider the area of the first
  2-surface on the flow that encloses the region $\dom$, and define $\R_{x_0}$
  to be the area radius of this surface. The radius $\R$ of the region $\dom$
  is defined as the infimum of $\R_{x_0}$ over all points $x_0$ on the
  manifold.
\end{definition}
In figure \ref{fig:2} we draw an schematic picture of the flow starting at a
typical point $x_0$.  In flat space, the IMCF starting at a point develop
spheres, and hence the definition \ref{d:imcf} coincides with the definition
\ref{d:R} presented in the previous section. However, we emphasize that this
definition is very different as the one used in \cite{Dain:2013gma}
\cite{Khuri:2015zla} \cite{Khuri:2015xpa}. 

The radius $\R$ defined above certainly involves sophisticated mathematics,
however it is important to recall that it can be explicitly estimated
numerically for arbitrary curved backgrounds.

It is important to recall that, in general, the flow will develop
singularities. These singular behaviour can be treated using the weak
formulation discovered in \cite{Huisken01}. In what follows, for simplicity of
the presentation, we will assume that the flow is smooth, however all the
arguments are also valid in the weak formulation.

\begin{figure}
  \centering
  \includegraphics{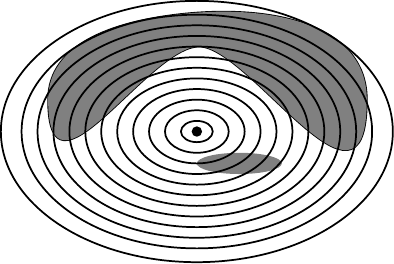}
  \caption{Schematic drawing of the inverse mean curvature flow from a typical
    point. The last surface is defined as the first one that enclosed the
    domain $\dom$.}
  \label{fig:2}
\end{figure}

We have the following result. 
\begin{theorem}
\label{t:GR}
Consider an asymptotically flat, complete, time-symmetric initial for
Einstein's equations that satisfy the dominant energy condition and with no
minimal surfaces.  Assume that there is a region $\dom$ outside of which the
data are electrovacuum.  Then we have
  \begin{equation}
    \label{eq:36}
 Q^2 \leq 2\E \R,
  \end{equation}
where $\E$ is the ADM mass, $Q$ is the charge contained in $\dom$ and $\R$ is
the radius of $\dom$ defined above. Moreover, if the equality in (\ref{eq:36})
holds, then the data is flat inside the region $\dom$.  
\end{theorem}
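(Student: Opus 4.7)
The plan is to run the inverse mean curvature flow (IMCF) starting from a base point $x_0$ as in Definition \ref{d:imcf} and to combine two classical monotonicity results: Geroch's monotonicity of the ordinary Hawking mass $m_H$ under $R\geq 0$, and Jang's monotonicity of the \emph{charged} Hawking mass
\begin{equation*}
  m_{ch}(\Sigma)=m_H(\Sigma)+\frac{Q(\Sigma)^{2}}{2\,r(\Sigma)},\qquad r(\Sigma)=\sqrt{|\Sigma|/(4\pi)},
\end{equation*}
along IMCF in the electrovacuum region where $R\geq 2|\mathbf{E}|^{2}$. For time-symmetric initial data the Hamiltonian constraint reads $R=16\pi\mu^{\mathrm{matter}}+2(|\mathbf{E}|^{2}+|\mathbf{B}|^{2})$, so the DEC yields $R\geq 0$ everywhere and $R\geq 2|\mathbf{E}|^{2}$ outside $\dom$; these are precisely the hypotheses required for Geroch and Jang, respectively.

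Fix $x_0$ and let $\Sigma_s$ denote the resulting leaves, with $s_\R$ the first value of $s$ at which $\Sigma_{s_\R}$ encloses $\dom$. Then $r(\Sigma_{s_\R})=\R_{x_0}$, $Q(\Sigma_{s_\R})=Q$ (because $\dom$ contains all the charge), and since IMCF is expanding the leaves $\Sigma_s$ for $s>s_\R$ lie entirely in the electrovacuum region. I would then split the flow into two phases. On the first phase I apply Geroch's monotonicity for $m_H$ (valid since $R\geq 0$ globally), together with $m_H\to 0$ as the leaves shrink to $x_0$, to conclude $m_H(\Sigma_{s_\R})\geq 0$. On the second phase I apply Jang's monotonicity for $m_{ch}$ (valid since the flow lies in the electrovacuum region), together with $m_{ch}(\Sigma_s)\to M_{\mathrm{ADM}}=\E$ as $s\to\infty$ (the term $Q^{2}/(2r)$ vanishes because $r\to\infty$), to conclude $m_{ch}(\Sigma_{s_\R})\leq \E$. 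Subtracting the two inequalities,
\begin{equation*}
  \frac{Q^{2}}{2\R_{x_0}}=m_{ch}(\Sigma_{s_\R})-m_H(\Sigma_{s_\R})\leq \E,
\end{equation*}
so $Q^{2}\leq 2\E\,\R_{x_0}$, and taking the infimum over $x_0$ yields (\ref{eq:36}).

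For the rigidity statement, equality forces equality in both monotonicities along a minimizing sequence of base points. Equality in Geroch's formula on the first phase imposes $R=0$ and umbilical, spherical leaves on the region swept before $s_\R$, which via the Hamiltonian constraint forces $\mu^{\mathrm{matter}}=0$, $\mathbf{E}=\mathbf{B}=0$, and a flat metric on that region; letting $x_0$ range over an infimum sequence should sweep out $\dom$ and give the claimed flatness. The main obstacle will be the rigorous handling of IMCF starting from a point: in general the flow is singular and one must work with the weak level-set formulation of Huisken and Ilmanen, checking that both monotonicities and the limits $m_H,m_{ch}\to 0$ as the leaves shrink to $x_0$ and $m_{ch}\to\E$ at infinity survive the jumps of the weak flow, and that the transition at $s_\R$ is handled correctly when the first enclosing leaf appears through a jump. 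The no-minimal-surfaces hypothesis is exactly what guarantees that the weak flow is well-defined and reaches the asymptotic region, so the strategy can be implemented within that framework.
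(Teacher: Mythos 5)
Your proposal is correct and follows essentially the same route as the paper: run the IMCF from a point, use Geroch monotonicity with $R\geq 0$ to get non-negative Hawking mass on the first leaf enclosing $\dom$, then use Jang's charge estimate along the flow out to infinity, and finish with the Huisken--Ilmanen weak formulation and rigidity. The only cosmetic difference is that you package Jang's contribution as monotonicity of the charged Hawking mass $m_H+Q^2/(2r)$, whereas the paper states it as a lower bound $Q^2/(2\R_{x_0})$ on the scalar-curvature integral appearing in Geroch's formula; these are the same estimate.
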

\begin{proof}
  The proof is inspired in Reiris's proof \cite{Reiris:2014tva} and it is a
  simple consequence of the results presented in \cite{Jang79} and
  \cite{Huisken01}.

The crucial property of the IMCF is the Geroch monotonicity of the Hawking
energy.  The Hawking energy of a closed 2-surface $\Su$ is given by
\begin{equation}
  \label{eq:34}
  \EH(\Su)= \sqrt{\frac{A}{16\pi}}\left( 1- \frac{1}{16\pi}\int_{\Su} H^2 
  \right),
\end{equation}
where $H$ is the mean curvature of the surface and $A$ its area.  The Geroch
monotonicity can be written in the following form. Assume that the flow runs
between a surface $\Su_r$ and a surface $\Su_s$, with $r<s$, then we have
  \begin{equation}
    \label{eq:41}
    \EH(\Su_s) \geq \EH(\Su_r) + \frac{1}{(16\pi)^{3/2}} \int_r^s (A_t)^{1/2}
    \int_{\Su_t}   R\, dt,
  \end{equation}
where $R$ is the scalar curvature. Note that  the dominant energy condition for
time-symmetric data implies that
$R\geq 0$.  We will use inequality (\ref{eq:41}) in
two steps. 

First, consider an arbitrary point $x_0$ and run the IMCF from $x_0$. Since the
data satisfy the dominant energy condition, an small sphere around $x_0$ has
non-negative Hawking mass. Moreover, the assumption that there are no minimal
surfaces on the data guarantees that the flow runs up to infinity (even in the presence of
singularities, see \cite{Huisken01}). Then, using (\ref{eq:41}) we conclude
that any level set of the flow has non-negative Hawking energy. In particular,
the surface $\Su_{x_0}$ that encloses the region $\dom$ used in the definition
\ref{d:imcf}, that is
\begin{equation}
  \label{eq:43}
  \EH(\Su_{x_0}) \geq 0.
\end{equation}
Denote by $A_{x_0}$ the area of $\Su_{x_0}$ and  the area radius is given by
$\R_{x_0}=\sqrt{A_{x_0}/4\pi}$.

In the second step, we continue the flow from the surface $\Su_{x_0}$ to
infinity. Following  \cite{Jang79}, we bound the integral of the scalar
curvature in terms of the charge
\begin{equation}
  \label{eq:77}
  \frac{1}{(16\pi)^{3/2}} \int_{x_0}^\infty (A_t)^{1/2}
    \int_{\Su_t}   R  \, dt \geq \frac{Q^2}{2\R_{x_0}},
\end{equation}
where we have used that the charge is conserved outside $\Su_{x_0}$, since by
construction $\Su_{x_0}$ encloses the region $\dom$ and by assumption the
support of the charge density is contained in $\dom$. Using (\ref{eq:77}) and
(\ref{eq:41}) we obtain 
\begin{equation}
  \label{eq:42}
  \E-\frac{Q^2}{2\R_{x_0}} \geq  \EH(\Su_{x_0}).
\end{equation}
Using (\ref{eq:43}) we finally get 
\begin{equation}
  \label{eq:48}
  \E-\frac{Q^2}{2\R_{x_0}}\geq 0. 
\end{equation}
In particular this inequality applies to the radius $\R$ and hence inequality
(\ref{eq:36}) follows. 

If the equality holds in (\ref{eq:42}), then we have $\EH(\Su_{x_0})=0$ and
hence we can use the same rigidity argument as in \cite{Huisken01} to conclude
that inside $\Su_{x_0}$ the data are flat. 
\end{proof}

We have obtained a similar kind of estimate as in the electromagnetic
case \eqref{eq:12} in which $\EH(\Su_0)$ is interpreted as the quasilocal
energy inside $\Su_0$.

Comparing theorem \ref{t:GR} with theorem \ref{t:em} in Electromagnetism, we
see that there is no rigidity statement outside the region $\dom$ in theorem
\ref{t:GR}. The natural question is whether a similar statement as in theorem
\ref{t:em} holds, namely, the equality implies that the field is produced by a
charged thin shell. However, it is likely that the charged thin shell in
General Relativity never saturate the inequality (in contrast with
Electromagnetism). The reason is that the rest energy of the shell is now taken
into account. Hence, a stronger rigidity statement is expected for theorem
\ref{t:GR}: the equality holds if and only if the complete data are flat. We
are currently working on this problem \cite{anglada14}.

It will interesting to include angular momentum in theorem \ref{t:GR}. However,
this appears to be a difficult problem. In particular it is not clear how to
include angular momentum in the inequality using the IMCF.

\section{Acknowledgements}
This work was inspired by a talk of Horacio Casini, in the conference
Strings@ar, La Plata, Argentina, November 2014.  This work was supported by
grant PIP of CONICET (Argentina) and grant Secyt-UNC (Argentina).


\end{document}